\newtheorem{theorem}{Theorem}[section]
\theoremstyle{remark}
\begin{document}

\title{Optimal Pooling Matrix Design for Group Testing with Dilution (Row Degree) Constraints}

%\author{
%Weiyu Xu, {\footnote{
%Department of Electrical and Computer Engineering, University of Iowa, Iowa City, IA 52242. Corresponding email: weiyu-xu@uiowa.edu.}}\,}\,

\author{Jirong Yi,
        Myung Cho, Xiaodong Wu,
      ~Raghu~Mudumbai, and Weiyu~Xu,% <-this % stops a space
\thanks{Jirong Yi, Raghu Mudumbai, Xiaodong Wu and Weiyu Xu are with Department of Electrical and Computer Engineering, University of Iowa, Iowa City, IA, 52242.}% <-this % stops a space
\thanks{Myung Cho is with Department of Electrical and Computer Engineering, Penn State Behrend, Erie, PA, 16563.}% <-this % stops a space
\thanks{Email: weiyu-xu@uiowa.edu}% <-this % stops a space
}

%\textsuperscript{c}\thanks{\noindent\textsuperscript{c}Department of Electrical and Computer Engineering, University of Iowa, Iowa City, IA 52242. \\
%Email: \texttt{weiyu-xu@uiowa.edu}}
%}

\maketitle
\begin{abstract}
In this paper, we consider the problem of designing optimal pooling matrix  for group testing (for example, for COVID-19 virus testing)  with the constraint that no more than $r>0$ samples can be pooled together, which we call ``dilution constraint''.  This problem translates to designing a matrix with elements being either 0 or 1 that has no more than $r$  `1's in each row and has a certain performance guarantee of identifying anomalous elements.  We explicitly give pooling matrix designs that satisfy the dilution constraint and have performance guarantees of identifying anomalous elements, and prove their optimality in saving the largest number of tests, namely showing that the designed matrices have the largest width-to-height ratio among all constraint-satisfying 0-1 matrices.

\end{abstract}

\section{Introduction}\label{Sec:Introduction}
In the current COVID-19 pandemic, before effective vaccines are introduced,  large-scale shutdowns can be safely ended if a systematic ``test and trace" program \cite{lee_interrupting_2020,salathe_covid-19_2020} is adopted to contain  the spread of the virus. Widespread availability of mass diagnostic testing is needed for re-opening economy, school and social activities. However, most countries including the US are currently experiencing a scarcity \cite{ranney_critical_2020} of various medical resources including tests \cite{emanuel_fair_2020}, and the test throughput or capacity can be limited. Pooled sample testing has been proposed as a method for increasing the effective capacity of existing testing infrastructure using the classical method of group testing or  newly introduced compressed sensing techniques for virus testing \cite{dorfman_detection_1943, sinnott-armstrong_evaluation_2020,shani-narkiss_efficient_2020,hogan_sample_2020, yelin_evaluation_2020, yi_low-cost_2020,ghosh_compressed_2020} using the RT-qPCR (real-time Quantitative Polymerase Chain Reaction) tests.

However, there is potentially a dilution problem associated with group testing when  samples are pooled together.  When a positive sample is mixed with multiple negative samples,  the concentration of the substances to be tested will be reduced, sometimes below the detection threshold of adopted testing technologies.   This limits the number of samples that can be pooled together.  Recently the FDA of the US has authorized pooled testings for  COVID-19 infection by  Quest Diagnostics, and another pooled testing from the LabCorp.  Possibly partially due to dilution concerns, the FDA's emergency use authorizations (EUA) for pooled testing allows pooling to no more than 4 specimens for the test developed by Quest Diagnostics, and allows pooling to no more than 5 specimens for the test from the LabCorp \cite{office_of_the_commissioner_coronavirus_2020-2,office_of_the_commissioner_coronavirus_2020-1,office_of_the_commissioner_coronavirus_2020,quest_digostics_fda_2020,hale_fda_2020,taylor_high-throughput_2010,taylor_labcorp_2020,quest_diagostics_sars-cov-2_2020}.

Considering the dilution constraint, we ask the following question, ``What is the best pooling design which maximizes test throughput while having a required capability of identifying infected samples, and satisfying the dilution constraint of pooling no more than $r>0$ specimens in each pool?'' In this paper, we give explicit designs for such matrices, and prove their optimality in maximizing test throughput under dilution constraint. We remark that, when there is no dilution constraint, maximizing test throughput under a given infection-identifying capability reduces to a traditional group testing problem.

\section{Problem Formulation and Optimal Designs}
\label{sec:problemformulation}
In this paper, we focus on non-adaptive group testings which deal with $n$ test subjects, and perform $m$ non-adaptive (pooled) tests. Non-adaptive testing has the advantage of reducing the latency of an individual's test because one can infer infection statuses of a group of subjects in a single RT-qPCR run.  We require a pooling matrix design using which we are able to infer the correct infection status for each of these $n$ people, whenever there are no more than $k$ people infected among them. 
It is well-known that the pooling strategy can be represented by an $m \times n$ matrix $A$ which elements being `0' or `1'.  The $j$-th ($1\leq j \leq n$) person's sample is part of the $i$-th pool (test, $1\leq i \leq n$) if and only if  the element of $A$ in its $i$-th row and $j$-th column is `1', namely $A_{i,j}=1$. Moreover, as mentioned above, we require that each row of the pooling matrix have at most $r$ 1's.  Given an identification capability parameter $k$, we would like to design a group testing matrix of dimension $m \times n$ such that $\frac{n}{m}$ is maximized.  In this short paper, we focus on $k=1$, but the results in this paper can be extended to larger $k$ using similar techniques.  

\begin{theorem}
Let us define a  `good' group testing matrix of dimension $m \times n$ as a matrix satisfying the following two conditions:  1) using this matrix we can correctly infer the correct status of each of the $n$ elements whenever there is no more than $k=1$ positive element;  2) and there are no more than $r$ 1's in each row of such a matrix.   Then there is a `good' group testing matrix which satisfies $m=r$, $n=\frac{r(r+1)}{2}$, and  
$$ \frac{n}{m}   =\frac{r+1}{2}.  $$

Moreover, there are no `good' group testing matrices of dimension $m \times n$  for which  

$$ \frac{n}{m}   > \frac{r+1}{2}.  $$

\end{theorem}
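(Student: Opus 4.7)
The plan is to separate the two claims — the existence of an optimal matrix and the upper bound on $n/m$ — and handle each via direct combinatorial reasoning about the columns of $A$. The crux is first to translate the identification requirement into a clean condition on columns: since the pooled outcome vector is the bitwise OR of the columns of positive subjects, and we are promised at most $k=1$ positive, the observed outcome is either the all-zero vector (no positives) or exactly the $j$-th column of $A$ (positive at position $j$). Therefore $A$ is `good' if and only if (i) every column is nonzero, (ii) all columns are pairwise distinct, and (iii) every row has at most $r$ ones.

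For achievability I would construct an explicit matrix with $m=r$. Take as columns the $r$ weight-$1$ indicator vectors in $\{0,1\}^r$ together with all $\binom{r}{2}$ weight-$2$ indicators of two-element subsets of $\{1,\ldots,r\}$. This yields $n = r + \binom{r}{2} = r(r+1)/2$ columns, all distinct and nonzero. By symmetry, each row participates in exactly one weight-$1$ column and $r-1$ weight-$2$ columns, so each row has weight $1 + (r-1) = r$, matching the dilution constraint with equality. Thus $n/m = (r+1)/2$ is attained.

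For the upper bound I would count the total number of ones in $A$ in two ways. Summing over rows gives at most $mr$ ones. Summing over columns, let $n_1$ denote the number of weight-$1$ columns. Since the only weight-$1$ vectors in $\{0,1\}^m$ are the $m$ standard basis vectors, distinctness forces $n_1 \le m$, and the remaining $n - n_1$ columns each have weight at least $2$. Hence the total count is at least $n_1 + 2(n - n_1) = 2n - n_1 \ge 2n - m$. Combining, $2n - m \le mr$, so $n/m \le (r+1)/2$ whenever $n \ge m$; the case $n<m$ is trivial since $(r+1)/2 \ge 1$ for $r\ge 1$.

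I do not anticipate a serious obstacle. The one step where some care is needed is the initial reformulation, which depends on the correct model of the test outcome (bitwise OR) and on the fact that ``at most one positive'' includes the zero-positive case, which is precisely what forces the \emph{nonzero}-column requirement in addition to pairwise distinctness. Once that is in place, both claims reduce to a single double-counting inequality on the $1$'s of $A$, and the symmetric construction of weight-$\le 2$ columns automatically saturates it.
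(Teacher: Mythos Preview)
Your proof is correct and follows essentially the same route as the paper: both arguments translate condition (1) into ``all columns nonzero and pairwise distinct,'' use the identical construction with all weight-$1$ and weight-$2$ columns over $m=r$ rows, and obtain the upper bound by double-counting the total number of $1$'s against the row budget $mr$. Your direct inequality $2n-m\le mr$ is in fact a streamlined version of the paper's contradiction argument (which introduces an auxiliary index $l$ that is not really needed), and your aside about the case $n<m$ is harmless but unnecessary since $2n-m\le mr$ already yields $n/m\le (r+1)/2$ unconditionally.
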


\begin{proof}
We first prove that there are no `good' group testing matrices for which $\frac{n}{m}>\frac{r+1}{2}$.  To make sure that we can correctly infer the correct status of each of the $n$ elements whenever there is no more than $k=1$ positive element,  the columns of this matrix must be non-zero columns and must be distinct from each other.   Suppose that we have a `good' group matrices, which has $n_{1}$ columns with a single `1',  $n_{2}$ columns with two `1's, $n_{3}$ columns with three `1's, ...., and $n_{m}$ columns with $m$  `1's. 

Then we have 
$$ \sum_{i=1}^{m} n_{i}=n. $$

Moreover, for every $1\leq i \leq m$, we  have 

$$n_{i} \leq \binom{m}{i}.$$

In addition, the total number of `1's in the matrix is given by  $T=\sum_{i=1}^{m} i \times n_{i}$. Since  the number of `1's in each row is no more than $r$, we have 

$$T\leq r m.$$

We now argue that $n$ cannot be bigger than $m \times \frac{r+1}{2}$. Suppose instead that $n>m \times \frac{r+1}{2}$.  Since there are at most $m$ columns with a single `1', then  there must be at least $$\sum_{i=2}^{m} n_{i}\geq n-m> m \times \frac{r+1}{2}-m=m(\frac{r-1}{2})$$ columns which have at least two `1's.  

We let $l$ be the largest integer such that $$ \sum_{i=1}^{l} \binom{m}{l} \leq n.$$  Then we have 
\begin{align}
&T=\sum_{i=1}^{m} i \times n_{i} \\
&\geq  \sum_{i=1}^{l} i \binom{m}{i} + \left (n-  \sum_{i=1}^{l} \binom{m}{i} \right ) \times (l+1)\\
&> 1 \times m+m(\frac{r-1}{2})\times 2\\
&=mr.
\end{align}
However, this contradicts with $T\leq r m$. So we must have $\frac{n}{m}\leq \frac{r+1}{2}$.

Now we show an explicit construction of a  $m \times n $ `good' group testing matrix with $m=r$ and  $n=\frac{r(r+1)}{2}$.  This matrix has $m=r$ distinct columns with a single `1' (part of this matrix making an $r\times r$ identity matrix), and 
$\binom{m}{2}=\binom{r}{2}=\frac{r(r-1)}{2}$ distinct columns with two  `1's in each column.  So in total there are $r+2 \times \frac{r(r-1)}{2}=r+r(r-1)=r^2$ `1's in this matrix.  By the symmetry of this matrix's construction, each row of this matrix has exactly $r^{2} \times \frac{1}{r}=r$  `1's in each of its rows. 

\end{proof}

\section{Examples for FDA approved Pooling Size $r=4$ or $r=5$}\label{Sec:examples}
From the result in the last section, we have the following optimal design for $r=4$:
\begin{equation}
  A=\begin{bmatrix}
    1 & 0 & 0 & 0 & 1 & 1 & 1 & 0 & 0 & 0\\
    0 & 1 & 0 & 0 & 1 & 0 & 0 & 1 & 1 & 0\\
    0 & 0 & 1 & 0 & 0 & 1 & 0 & 1 & 0 & 1\\
    0 & 0 & 0 & 1 & 0 & 0 & 1 & 0 & 1 & 1\\
  \end{bmatrix},
  \end{equation}

and the following optimal design for $r=5$:

{\center{ \resizebox{0.6\linewidth}{!}{
$ A=\displaystyle \left(  \begin{array}{rrrrrrrrrrrrrrr} 
    1 & 0 & 0 & 0 & 0 & 1 & 1 & 1 & 1 & 0 & 0 & 0 & 0 & 0 & 0\\
    0 & 1 & 0 & 0 & 0 & 1 & 0 & 0 & 0 & 1 & 1 & 1 & 0 & 0 & 0\\
    0 & 0 & 1 & 0 & 0 & 0 & 1 & 0 & 0 & 1 & 0 & 0 & 1 & 1 & 0\\
    0 & 0 & 0 & 1 & 0 & 0 & 0 & 1 & 0 & 0 & 1 & 0 & 1 & 0 & 1\\
    0 & 0 & 0 & 0 & 1 & 0 & 0 & 0 & 1 & 0 & 0 & 1 & 0 & 1 & 1
  \end{array} \right).
$
}
}
}

\section{Conclusions and Future Work}\label{Sec:ConclusionsDiscussions}
It would be interesting to extend this work to adaptive testing, and compressed sensing with dilution constraints \cite{CSwithConstraints}.

\bibliography{01Ref_CS_Virus_Testing}
\bibliographystyle{unsrt}
\end{document}